\theoremstyle{definition}
\newtheorem{theorem}{Theorem}[section]
\newtheorem{lemma}[theorem]{Lemma}
\newtheorem{proposition}[theorem]{Proposition}
\theoremstyle{definition}
\theoremstyle{remark}
\newcommand{\half}{\mbox{$\frac{1}{2}$}}
\newcommand{\beq}{\begin{equation}}
\newcommand{\eeq}{\end{equation}}
\newcommand{\as}{\mathrm{asym}}
\newcommand{\nLL}{n\mathrm{LL}}
\newcommand{\LLL}{\mathrm{LLL}}
\newcommand{\rv}{\mathbf{r}}
\newcommand{\qv}{\mathbf{q}}
\newcommand{\im}{\mathrm{i}}
\newcommand{\R}{\mathbb{R}}
\begin{document}
\centerline{\bf \Large\Large Quantum Hall states in higher Landau levels}
\bigskip\bigskip
{\large \bf Jakob Yngvason}

{\small Fakultät für Physik, Universität Wien, Boltzmanngasse 5, 1090 Vienna, Austria\\
Erwin Schrödinger Institute for Mathematics and Physics, Boltzmanngasse 9, 1090 Vienna, Austria.\\
E-mail: jakob.yngvason@univie.ac.at}

\bigskip

\bigskip\bigskip
\noindent{\bf \large Abstract} 

The unitary correspondence between Quantum Hall states in higher Landau levels and states in the lowest Landau level is discussed together with the resulting transformation formulas for particle densities and interaction potentials. This correspondence leads in particular to a representation of states in arbitrary Landau levels in terms of holomorphic functions. Some important special Quantum Hall states  in higher Landau levels are also briefly discussed.

\bigskip
\noindent{\bf \large Keywords.} Magnetic Hamiltonians, Landau levels, cyclotron motion, guiding center variables, holomorphic wave functions, coherent state representations, effective Hamiltonians, Laughlin states, Moore-Read states, anti-Pfaffian states, Landau level mixing\\

\section{Key objectives}
\begin{itemize}
\item Describe the emergence of Landau levels from for the quantized  two-dimensional motion of charged particles in a perpendicular magnetic field.
\item Introduce non-commutative guiding center and cyclotron variables  and the associated two commuting harmonic oscillators.
\item Discuss the description of states in higher Landau levels in terms of holomorphic wave functions in the lowest Landau level.
\item Derive formulas that express particle densities and interaction potentials in an arbitrary Landau level in terms of corresponding quantities in the lowest Landau level.
\item Discuss briefly Laughlin states in higher Landau levels and recent investigations of states with filling factor 5/2.
\end{itemize}

\section{Notations and acronyms}

\noindent 
$\ell_B$: Magentic length\\
$\LLL$: Lowest Landau level (single particle) \\
$\LLL^N$: Lowest Landau level ($N$ particles) \\
$\nLL$: $n$-th Landau level (single particle) \\
$\nLL^N$: $n$-th Landau level ($N$ particles)\\
$\Pi_0$: Projector on LLL\\
$\Pi_n$: Prpjector on $\nLL$ 

\section{Introduction}

The quantum states of charged particles moving in a plane orthogonal to a homogeneous magnetic field are naturally grouped into Landau levels which correspond to quantization of the cyclotron motion of the particles  in the magnetic field. The interplay of this motion with external electric fields and  impurities as well as Coulomb interactions between the particles is the subject of Quantum Hall Physics which has developed into a major subfield of condensed matter physics since the discovery of the integer and fractional Quantum Hall Effects  in \cite{Kli} and \cite{StoTsuGos-82} respectively. These developments are extensively discussed in many monographs and review articles, see, e.g., \cite{ChaPie-88, ChaPie-95, Goerbig-09, GoeLed-06, Grei-2011, Jain-07, PraGir-1987, Tong-16}. The present chapter focuses on one particular theoretical aspect, namely the relations between some basic properties of Quantum Hall states in different Landau levels.

In strong magnetic fields the lowest Landau level, corresponding to the smallest of the discrete values for the cyclotron radius, plays a special role. The fractional quantum Hall effect was first discovered in strongly correlated states within this level and the  celebrated wave function introduced by \cite{Laughlin-83} for a quantum liquid at filling factor 1/3 (more generally $1/q$ with small odd values of $q$), turned out to be very successful in describing ground state properties as well as excitations.

An important mathematical feature of wave functions in the lowest Landau level is the fact that they form a Bargmann space of {\em holomorphic} functions when the position coordinates are expressed as complex numbers in the symmetric gauge. The holomorphy has been used for derivations of some basic properties of the states, for examples bounds on the local particle density for states related to the Laughlin wave function (\cite{LieRouYng-16, LieRouYng-17}). Wave functions in higher Landau levels, on the other hand,  involve also powers of the complex conjugate position variables in the standard representation and at first sight it might appear that the advantage of  holomorphic representatons gets lost.

It was, however, noted early \cite{MacDonald-84} that, due to a unitary correspondence between states in different Landau levels, wave functions in higher Landau levels can also be represented by holomorphic functions. In particular, there are Laughlin states in any Landau level. The physical picture behind this correspondence is 
based on the fact that the position variable has a natural decomposition into two sets of non-commutative variables, one associated with the cyclotron motion and the other with the  guiding centers around which the cyclotron motion takes place. Mathematically, these correspond to two sets of mutually commuting harmonic oscillators.  For a given Landau level, the cyclotron motion can be thought of as fixed, while the physics takes place in the guiding center degrees of freedom where holomorphic wave functions appear naturally. 

In this chapter the mathematical implementation of this idea is presented in two versions, one employing ladder operators and another one based on coherent states. Each brings different aspects of the unitary correspondence 
between states in higher Landau levels with those in the lowest Landau level into focus. Furthermore we derive formulas that relate particle densities and effective potentials  in higher Landau levels to their counterparts in the lowest Landau level.

In the last section we discuss some important special states in higher Landau levels and their relation to states in the lowest Landau level,

\section{The magnetic Hamiltonian and its Eigenfunctions}

\subsection{The Landau levels}

The magnetic Hamiltonian of a single spinless (or spin-polarized)  particle of charge $q$ and effective mass $m^*$, moving in a plane with position variables  ${\mathbf r}=(x,y)$, is
\beq H=\frac 1{2m^*}(\pi_x^2+\pi_y^2)\label{magnham}\eeq
where 
\beq \mbox{\boldmath$\pi$}=(\pi_x,\pi_y)=\mathbf p-q\mathbf A\eeq
is the gauge invariant kinetic momentum with 
\beq\mathbf p=-\im \hbar (\partial_x,\partial_y)\eeq
the canonical momentum and $\mathbf A$ the magnetic vector potential. We assume a homogeneous magnetic field of strength $B$ perpendicular to the plane and choose the
symmetric gauge  
\beq\mathbf A=\frac B2(-y,x).\eeq
Moreover, we choose units and signs such that $|q|=1$, $qB\equiv B>0$, $\hbar=1$ and $m^*=1$. Then
\beq \pi_x=-\im \partial_x+\half B y,\quad \pi_y=-\im \partial_y-\half B x\eeq
and the kinetic momentum components satisfy the canonical commutation relations (CCR)
\beq [\pi_x,\pi_y]=\im \ell_B^{-2}\label{CCRpi}\eeq
with 
\beq\ell_B=B^{-1/2}\eeq
the {\em magnetic length}.

In terms of the  creation and annihilation operators
\beq \quad a^\dagger=\frac {\ell_B}{\sqrt 2}(-\pi_y-\im \pi_x), \quad a=\frac{ \ell_B} {\sqrt 2 }(-\pi_y+\im \pi_x)\label{a}\eeq
with
 $[a,a^\dagger]=1$
one can write \eqref{magnham} as
\beq H=2B(a^\dagger a+\half) \eeq
which is the Hamiltonian of a harmonic oscillator with eigenvalues 
\beq \epsilon_n=(n+\half)2B,\,\, n=0,1,\dots.\label{Landauspec}\eeq
The Gaussian
\beq \varphi_0({\bf r})=\frac 1{\sqrt {\pi}}\,e^{-(x^2+y^2)/4\ell_B^2}\label{gaussian}\eeq
with $a \phi_0=0$ is a ground state for $H$ with energy $\epsilon_0=B$.

The discrete energy values \eqref{Landauspec} result from the  quantization 
of the classical cyclotron motion of a charged particle around {\em guiding centers} as discussed below. Every energy eigenvalue is infinitely degenerate due to the different possible positions of the guiding centers. The  degeneracy per unit area is $(2\pi \ell_B^2)^{-1}$.

The eigenspace in $L^2(\mathbb R^2)$ to the eigenvalue $\epsilon_n$ is called the {\em $n$-th Landau level }and denoted by $\nLL$. The {\em lowest Landau level},  corresponding to $n=0$ will be denoted by $\LLL$.
The corresponding fermionic spaces for $N$ electrons,   denoted by $\nLL^N$ and $\LLL^N$ respectively, are the antisymmetric tensor powers
\begin{equation}\label{eq:LLN}
\LLL^N =  \LLL^{\otimes_a N}, \quad \nLL^N =  \nLL^{\otimes_a N}.  
\end{equation}
For bosons the antisymmetric tensor product is replaced by the symmetric one; this is, e.g., relevant for cold atomic gases in rapid rotation, where the rotational velocity takes over the role of the magnetic vector potential \cite{Cooper}.

The {\em filling factor} $\nu$  in a given Landau level is by definition the ratio of the particle density in that level to the degeneracy $(2\pi \ell_B^2)^{-1}$ per unit area. In a sample with area $A$ a completely filled Landau level of fermions (filling factor $\nu=1$) thus contains $N= (2\pi \ell_B^2)^{-1} A $ particles.

{\bf Remark.} In the whole chapter, with the exception of the last Section 8.2,  we focus on the orbital motion of the charged particles, ignoring spin and the accompanying  Zeeman energy $\sum_i g^*\mu_B\mathbf B\cdot\mathbf S_i$ (with $\mathbf S_i$ the spin operator for particle $i$, $\mathbf B$ the magnetic field, $\mu_B$ the Bohr magneton and $g^*$ the effective Land\'{e} g-factor). This is justified if the particles are either spinless, or their spin is fully polarized due to a strong magnetic field. Moreover,  in semiconductors  the  Zeeman energy is typically much smaller than the gap between Landau levels because the effective g-factor is small. See, e.g.,  \cite{Tong-16}, p.~16.

\subsection{The two oscillators}\label{sec:guiding}
Quantum mechanically the dynamics of the guiding centers is described by another harmonic oscillator commuting with the cyclotron oscillator \eqref{magnham}. While the latter determines the Landau energy spectrum \eqref{Landauspec}, the quantization of the guiding center oscillator parametrizes the degeneracy creating a natural basis of states in each Landau level.

 One arrives at this  picture by splitting the (gauge invariant) position operator $\bf r$ into a guiding center part $\bf R$ and the cyclotron part 
\beq \widetilde{\mathbf R}=\ell_B^2\mathbf n\times \mbox{\boldmath$\pi$},\eeq 
with $\mathbf n$  the unit normal vector to the plane. Both  $\bf R$  and $\widetilde{\bf R}$ are gauge invariant and they commute with each other. On the other hand the two components of $(R_x, R_y)$ of $\bf R$  do not commute  and likewise for the components of $\widetilde{\bf R}$. More precisely, we have
\beq \mathbf r=\mathbf R+\widetilde{\mathbf R}\label{splitting}\eeq
with
\beq R_x=x+\ell_B^2 \pi_y=\half x-\im \ell_B^2\partial_y, \quad R_y=y-\ell_B^2\pi_x=\half y+\im \ell_B^2\partial_x,\eeq
\beq \widetilde{R}_x=-\ell_B^2 \pi_y=\half x+\im \ell_B^2\partial_y, \quad \widetilde{R}_y=\ell_B^2\pi_x=\half y-\im \ell_B^2\partial_x\eeq
and the commutation relations
\beq[\mathbf R,\widetilde {\mathbf R}]=\mathbf 0,\quad
[R_x,R_y]=-\im \ell_B^2,\quad [\widetilde{R}_x,\widetilde{R}_y]= \im \ell_B^2.\label{CCR}
\eeq
The creation and annihilation operators for $\widetilde{\mathbf R}$ are the same as \eqref{a}, namely
\beq a^\dagger=\frac1{\sqrt 2 \ell_B}(\widetilde{R}_x-\im \widetilde{R}_y),\quad  a=\frac1{\sqrt 2 \ell_B}(\widetilde{R}_x+ \im \widetilde{R}_y).\label{aa}\eeq
Those for the guiding centers, on the other hand, are
\beq b^\dagger=\frac 1{\sqrt 2 \ell_B}(R_x+\im R_y),\quad b=\frac 1{\sqrt 2 \ell_B}(R_x-\im R_y).\label{b}\eeq
with 
$[b,b^\dagger]=1$ and $[a^{\#}, b^{\#}]=0$.
Note the different signs compared to \eqref{aa} due to the different signs in \eqref{CCR}. 

\subsection{Complex notation}\label{sec:complex}

The two-dimensional configuration space $\mathbb R^2$ can be identified with the complex plane $\mathbb C$. Defining complex coordinates and derivatives by
\begin{equation}
z =x+\im y, \quad \bar z=x-\im y, \quad \partial_z=\half(\partial_x-\im \partial_y), \quad \partial_{\bar z}=\half(\partial_x+\im \partial_y) 
\end{equation}
we can write
\beq a^\dagger=\frac 1{\sqrt 2 \ell_B} (\half \bar z-2\ell_B^2 \partial_z),\quad   a=\frac 1{\sqrt 2 \ell_B} (\half z+2\ell_B^2 \partial_{\bar z}).\label{aellb}\eeq
Choosing units so that $B=2$, or equivalently, defining $z=\frac 1{\sqrt 2 {\ell_B}}(x+\im y)$, this simplifies to
\beq a^\dagger=\half \bar z- \partial_z, \quad  a= \half z+ \partial_{\bar z}.\label{a-operators}\eeq
Also, the gaussian factor
$e^{-(|x|^2+|y|^2)/4\ell_B^2}$ 
becomes $e^{-|z|^2/2}$.
We note that besides the standard definition $z=x+\im y$, other complexifications of $\mathbb R^2$ are  possible and can be useful, as stressed in~\cite{Haldane-18}.

For computations it is often convenient to use instead of \eqref{a-operators} the operators  $\hat a^\dagger$, $\hat a$ which act only on the pre-factors to the gaussian and are defined by
\beq  a^{\#} \left[f(z,\bar z)e^{-|z|^2/2}\right]=\left[\hat a^{\#} f (z,\bar z)\right]e^{-|z|^2/2}.\eeq
These are
\beq \hat a^\dagger=\bar z-\partial_z, \quad  \hat a=\partial_{\bar z}.\label{hata}\eeq
In the sequel we shall generally use the hat $\hat{\phantom a}$ on operators and functions to indicate that 
gaussian normalization factors are dropped from the notation.

The creation and annihilation operators for the guiding center oscillator are in complex notation

\beq   b^\dagger=\frac 1{\sqrt 2 \ell_B} (\half z-2\ell_B^2 \partial_{\bar z}),\quad b=\frac 1{\sqrt 2 \ell_B} (\half \bar z+2\ell_B^2 \partial_z).\label{bellb}\eeq
For $B=2$ 
\beq b^\dagger=\half z- \partial_{\bar z}, \quad  b= \half \bar z+ \partial_{z}\eeq
and 
\beq   \hat b^\dagger=z-\partial_{\bar z},\quad \hat b=\partial_{z}.\label{hatb}\eeq
The splitting \eqref{splitting} corresponds to
\beq (z,\bar z)=(b^\dagger, b)+(a,a^\dagger).\label{splitting2}\eeq
While the operators $a^\dagger, a$ increases or decrease the Landau level index, the operators $b^\dagger, b$ leave each Landau level invariant. Pictorially speaking we can say that operators $a^\#$ associated with the cyclotron oscillator are ladder operators moving states \lq\lq vertically\rq\rq\ on the energy scale,  while the ladder operators $b^\#$  associated with the guiding center oscillator move states \lq\lq horizontally\rq\rq, keeping the energy fixed. 

\subsection{Eigenfunctions}

With the gaussian  $\varphi_{0,0}:=\varphi_0$ \eqref{gaussian} as the common, normalized lowest energy state for the two oscillators the states 
\beq\varphi_{n,m}= \frac1{\sqrt{n!m!}}(a^\dagger)^n(b^\dagger)^m\varphi_{0,0}=\frac1{\sqrt{n!m!}}(b^\dagger)^m(a^\dagger)^n\varphi_{0,0},\quad n,m=0,1,\dots\label{eigenfunctions}\eeq
form a basis of  eigenstates of the Hamiltonian \eqref{magnham}. For fixed $n$ the states $\varphi_{n,m}$, $m=0,1,\dots$ generate the Hilbert space $\nLL $ of the $n$'th Landau level with $n=0$ corresponding to the lowest Landau level $\LLL$.

In complex coordinates the wave functions with $n=0$ respectively $m=0$, are
\beq \varphi_{0,m} (\rv)=\frac 1{\sqrt{\pi m!}}\, z^m e^{-|z|^2/2},\quad \varphi_{n,0}(\rv)=\frac 1{\sqrt {\pi n!}}\, \bar z^ne^{-|z|^2/2}.\eeq
More generally, one can write
\beq \varphi_{n,m}(\rv)=\frac 1{\sqrt {\pi n!\,m!}} [(z-\partial_{\bar z})^m \bar z^n ] e^{-|z|^2/2}=\frac 1{\sqrt {\pi n!\,m!}}[(\bar z-\partial_{z})^n z^m ] e^{-|z|^2/2}\label{nLLbasis}.\eeq
These functions  can be written in polar coordinates in terms of associated Laguerre polynomials \cite{LaLi,Gra-94}. They are simultanous eigenfunctions of the Hamiltonian \eqref{magnham} and the angular momentum operator $\mathbf L=\mathbf r \times \mathbf p$ in the symmetric gauge. Its action on the pre-factor of the gaussian is
\beq \hat L=z\partial_z-\bar z\partial_{\bar z}\eeq
with eigenvalues $M=m-n,\ m=0, 1, \dots $ in the $n$LL. The operators $b^\dagger$ and $b$ shift the angular momentum within each Landau level. 

\subsection{Factorization}

The mutual commutativity of the two oscillators has an important consequence:

\begin{lemma}[{\bf Factorization}]\label{lem:factorization}
{\em If $A$ is a function of $a^\dagger, a$ and $B$ of $b^\dagger, b$, then
\beq
\langle \varphi_{n',m'}|AB|\varphi_{n,m}\rangle=\langle \varphi_{n',0}|A|\varphi_{n,0}\rangle\, \langle \varphi_{0,m'}|B|\varphi_{0,m}\rangle.\label{factorization}
\eeq}
\end{lemma}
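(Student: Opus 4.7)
The approach I would take is to exploit the tensor-product structure of the Hilbert space implicit in the mutual commutativity of the two oscillators. Since $\{\varphi_{n,m}\}_{n,m\geq 0}$ is an orthonormal basis of $L^2(\mathbb R^2)$, the map
$$
U:\ L^2(\mathbb R^2)\ \longrightarrow\ \mathcal F\otimes\mathcal F,\qquad U\varphi_{n,m}=|n\rangle\otimes|m\rangle,
$$
where $\mathcal F$ is the standard bosonic Fock space with basis $|k\rangle$, is a unitary isomorphism. The plan is to check that under $U$ the cyclotron ladder operators go to $(\cdot)\otimes I$ and the guiding-center ones to $I\otimes(\cdot)$; factorization of matrix elements will then be immediate.

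Concretely, because $\varphi_{n,m}=\frac1{\sqrt{n!\,m!}}(a^\dagger)^n(b^\dagger)^m\varphi_{0,0}$ and $[a^\#,b^\#]=0$, one computes
$$
a^\dagger\varphi_{n,m}=\sqrt{n+1}\,\varphi_{n+1,m},\quad a\varphi_{n,m}=\sqrt n\,\varphi_{n-1,m},
$$
and analogous formulas for $b^\dagger,b$ acting on the second index. Transported through $U$, this says $Ua^\dagger U^{-1}=c^\dagger\otimes I$ and $Ub^\dagger U^{-1}=I\otimes c^\dagger$, where $c^\dagger$ is the standard Fock creation operator; similarly for the annihilation parts. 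Consequently, any $A$ built (polynomially, or more generally through spectral/power-series calculus) from $a^\dagger,a$ satisfies $UAU^{-1}=\widehat A\otimes I$, and any $B$ built from $b^\dagger,b$ satisfies $UBU^{-1}=I\otimes\widehat B$, for the corresponding Fock operators $\widehat A,\widehat B$.

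The lemma then follows from the elementary factorization of inner products of product vectors under tensor products of operators:
$$
\langle\varphi_{n',m'}|AB|\varphi_{n,m}\rangle
=\bigl\langle n'\otimes m'\bigm|(\widehat A\otimes\widehat B)\bigm|n\otimes m\bigr\rangle
=\langle n'|\widehat A|n\rangle\,\langle m'|\widehat B|m\rangle,
$$
and each factor equals the corresponding matrix element on the right-hand side of \eqref{factorization} by specialization to $m=m'=0$ or $n=n'=0$.

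If one prefers a more elementary derivation avoiding the explicit identification with $\mathcal F\otimes\mathcal F$, an alternative is to expand $A=\sum\alpha_{jk}(a^\dagger)^j a^k$ and $B=\sum\beta_{pq}(b^\dagger)^p b^q$, use $[a^\#,b^\#]=0$ to move all $b$-factors past all $a$-factors, and invoke the identities $a\varphi_{n,0}=\sqrt n\,\varphi_{n-1,0}$ and $b\varphi_{0,m}=\sqrt m\,\varphi_{0,m-1}$ (which follow from $a\varphi_{0,0}=b\varphi_{0,0}=0$ plus commutativity). The only ``obstacle'' is purely notational — keeping track of that rearrangement — and there is no analytic obstruction beyond assuming the usual dense-domain conditions under which the functional calculi of $A$ and $B$ are well defined on the algebraic span of the $\varphi_{n,m}$.
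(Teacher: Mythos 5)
Your proposal is correct and follows essentially the same route as the paper: identifying $L^2(\mathbb R^2)$ with a tensor product of two oscillator spaces via $\varphi_{n,m}\simeq\varphi_{n,0}\otimes\varphi_{0,m}$, so that $A$ and $B$ act on separate tensor factors and the matrix element factorizes. Even your fallback (expanding $A,B$ as polynomials in the ladder operators and using the CCR directly) matches the alternative the paper mentions parenthetically.
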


The proof follows from the fact that the two commuting harmonic oscillators \eqref{a} and \eqref{b} can be represented, in a unitarily equivalent way,  in the tensor product of two spaces with basis vectors $\varphi_{n,0}$ and $\varphi_{0,m}$ respectively. In this representation $\varphi_{n,m}\simeq \varphi_{n,0}\otimes \varphi_{0,m}$ and the operators $A$ and $B$  act independently on each of the tensor factors. (Alternatively,  one can choose $A,B$ to be polynomials in the creation and annihilation operators and use the CCR to prove \eqref{factorization} directly.)
Note, however, that in the representation \eqref{nLLbasis} the functions $\varphi_{n,m}(z,\bar z)$ are not simply products of the functions $\varphi_{n,0}(z)$ and $\varphi_{0,m}(\bar z)$. Indeed, the  variables $z$ and $\bar z$ do not act independently in the tensor factors because they involve sums of $a^\#$'s and $b^\#$'s , cf.~Eq. \eqref{splitting2}. The lemma, however, is in accord with  the intuitive picture, stressed in \cite{Haldane-13, Haldane-18}, that
in a given Landau level all the non-trivial structure is in the guiding center degrees of freedom while 
 the cyclotron motion just produces a decoration by a fixed factor in matrix elements.
 
 \subsection{The lowest Landau level as a Bargmann space of holomorphic functions}

Wave functions $\psi$ in the  lowest Landau level  $\LLL$ are characterized by the equation
\beq a\psi=0\eeq
which by \eqref{hata} means that 
\beq \psi(\rv)=f(z)e^{-|z|^2/2}\eeq
with $f$ satisfying the Cauch-Riemann equation $\partial_{\bar z}f=0$, i.e., $f$ is a {\em holomorhic} function of $z$. Holomorphic functions that are square integrable on $\mathbb R^2$ with respect to the Lebesgue measure with the gaussian weight factor $e^{-|z|^2}$ form a {\em Bargmann space} (\cite{Bargmann-61}).

In the same way,  $N$-particle fermionic wave functions 
in $\LLL^N$ have the form
\beq \Psi(\rv_1\dots, \rv_N)=F(z_1,\dots, z_N)e^{-(|z_1|^2+\cdots +|z_N|^2/2}\eeq

with a holomorphic, antisymmetric function $F$ of the $N$ variables $z_1,\dots, z_N$.

In the sequel we shall show that representations of states by holomorphic functions are not limited to the lowest Landau level. 

\section{Unitary maps between Landau levels}

The structure of the eigenfunctions  \eqref{eigenfunctions} with the mutually commuting sets of  creation and annihilation operators $a^\#$ and $b^\#$ for the two oscillators leads immediately to a unitary correspondence $n$LL $\leftrightarrow$ LLL.
In fact, the operator
\beq U_n= (n!)^{-1/2} a^n\quad\hbox{restricted to $n$LL}\label{Un}\eeq
maps $\varphi_{n,m}\mapsto \varphi_{0,m}$ because $a$ commutes with $b^\dagger$, and the inverse
\beq U_n^{-1}= (n!)^{-1/2} (a^\dagger)^n\quad\hbox{restricted to LLL}\label{Un-1}\eeq 
takes $\varphi_{0,m}\mapsto\varphi_{n,m}$. Hence $U_n$ is a unitary map $n$LL $\to$ LL.

The wave functions in $\nLL$ are by \eqref{nLLbasis} polynomials in $\bar z$ of order $n$ with coefficiants that are holomorphic functions of $z$. Using the representations \eqref{hata} for the creation and annihilation operators we conclude that the following holds:  

\begin{proposition}[\textbf{Unitary map between $\nLL$ and $\LLL$}]\label{lem:simple}\mbox{}\\
{\em Let $\psi_n\in\nLL$ have wave function
\beq\label{eq:nLLfunc}
\psi_n(\rv)=\sum_{k=0}^n \bar z^k f_k(z)e^{-|z|^2/2},\eeq 
with $f_k$ holomorphic for $k=0,\dots, n$. Then $\psi_0 = U_n \psi_n\in\LLL$ has the wave-function
\beq
\psi_0(\rv)= \sqrt{n!}f_n(z)e^{-|z|^2/2} \label{40}.
\eeq
Conversely, the wave function of $\psi_n = U_n^{-1}\psi_0\in\nLL$ is
\begin{align}\label{mainformula} 
\psi_n(\rv) &=  [(\bar z-\partial_z)^nf_n(z)]e^{-|z|^2/2}\nonumber\\
&= \left[\bar z^n f_n(z)+\sum_{k=1}^n(-1)^k {n\choose k} \,\bar z^{n-k}\partial^kf_n(z)\right]e^{-|z|^2/2}.
\end{align}}
\end{proposition}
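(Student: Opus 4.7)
The plan is to push everything through the hat convention introduced in \eqref{hata}: for any wave function of the form $\psi(\rv)=g(z,\bar z)e^{-|z|^2/2}$ we have $a^n\psi=(\partial_{\bar z}^n g)\,e^{-|z|^2/2}$ and $(a^\dagger)^n\psi=[(\bar z-\partial_z)^n g]\,e^{-|z|^2/2}$. Once the Gaussian is stripped, both identities become finite calculus on polynomials in $\bar z$ with holomorphic coefficients, and no operator ordering subtleties remain because $\partial_z$ and $\bar z$ commute.

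First I would establish \eqref{40}. Starting from $\psi_n(\rv)=\sum_{k=0}^n\bar z^k f_k(z)e^{-|z|^2/2}$, the definition \eqref{Un} of $U_n$ and the hat formula for $\hat a$ give
\beq
U_n\psi_n=\frac{1}{\sqrt{n!}}\,\partial_{\bar z}^{\,n}\!\Bigl[\sum_{k=0}^n\bar z^k f_k(z)\Bigr]e^{-|z|^2/2}.
\eeq
Since $f_k(z)$ is holomorphic, $\partial_{\bar z}$ only acts on the powers of $\bar z$, and $\partial_{\bar z}^{\,n}\bar z^k=0$ for $k<n$ while $\partial_{\bar z}^{\,n}\bar z^n=n!$. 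Only the top-degree term survives and one reads off $\psi_0=\sqrt{n!}\,f_n(z)e^{-|z|^2/2}$, proving the first formula.

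For the converse, I apply $U_n^{-1}=(n!)^{-1/2}(a^\dagger)^n$ to this $\psi_0$. The prefactor $\sqrt{n!}$ cancels, and the hat form of $a^\dagger$ yields
\beq
\psi_n(\rv)=[(\bar z-\partial_z)^n f_n(z)]\,e^{-|z|^2/2}.
\eeq
To obtain the explicit expansion, I note that $[\partial_z,\bar z]=0$, so the binomial theorem applies verbatim:
\beq
(\bar z-\partial_z)^n=\sum_{k=0}^n(-1)^k\binom{n}{k}\bar z^{n-k}\partial_z^k.
\eeq
Acting on the holomorphic function $f_n(z)$ and isolating the $k=0$ contribution gives precisely \eqref{mainformula}.

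There is no genuinely hard step here; the only thing one must be careful about is invoking the hat identification correctly so that the Gaussian is not differentiated extraneously, and observing that $\bar z$ and $\partial_z$ commute so that the binomial expansion is unambiguous. The argument could alternatively be phrased by applying $U_n^{-1}$ basis element by basis element via \eqref{eigenfunctions}, using $(a^\dagger)^n\varphi_{0,m}=\sqrt{n!/1}\,\varphi_{n,m}$ together with the second representation in \eqref{nLLbasis}, but the direct operator calculation above is both shorter and makes the structural role of $\hat a^\dagger=\bar z-\partial_z$ manifest.
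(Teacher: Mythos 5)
Your argument is correct and follows exactly the route the paper itself indicates: the paper derives the proposition by "using the representations \eqref{hata} for the creation and annihilation operators," i.e.\ $\hat a=\partial_{\bar z}$ kills all but the top power of $\bar z$ in $U_n\psi_n$, and $\hat a^\dagger=\bar z-\partial_z$ iterated $n$ times (expanded binomially, since $\bar z$ and $\partial_z$ commute) gives \eqref{mainformula}. You have simply written out the details that the paper leaves implicit; nothing further is needed.
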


Note that Equation~\eqref{mainformula} implies in particular that the holomorphic factor $f_n(z)$ to the highest power $n$ of $\bar z$ determines uniquely the factors to the lower powers $\bar z^k$:
\beq f_k(z)= (-1)^{n-k}\left({\begin{array}{c} n\\ k\\ \end{array}}\right) f_n^{(n-k)}(z).\label{134}\eeq
A wave function in $\nLL$  is thus {\em completely fixed by the holomorphic function $f_n$ and the Landau level index $n$}.
\bigskip

These considerations lead also to a formula for projecting functions into the lowest Landau level:

\begin{proposition}[\textbf{LLL projection}]\mbox{}\\
{\em Let $\phi$ be a state in $\bigoplus_{k=0}^n \: k\mathrm{LL}$ with wave function
\beq \phi(\rv)=\sum_{k=0}^n \bar z^k g_k(z)e^{-|z|^2/2}\label{135}\eeq
where the $g_k$ are holomorphic functions.  It's orthogonal projection into $\LLL$ is 
\beq\label{eq:LLL projection}
\Pi_0\phi(z)=\sum_{k=0}^n \partial^k g_k(z) e^{-|z|^2/2} .
\eeq}
\end{proposition}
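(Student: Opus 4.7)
The plan is to remove the Gaussian prefactor $e^{-|z|^2/2}$ throughout, so that (by Section~4.6) $\LLL$ is identified with the Bargmann space of holomorphic elements of $L^2(\mathbb{C},\,e^{-|z|^2}\,dA)$ and $\Pi_0$ becomes the Bergman projection $\Pi_{\mathrm{hol}}$ onto that closed subspace. By linearity of the projection, the proposition reduces to the single identity
\[
\Pi_{\mathrm{hol}}(\bar z^{k} g) = \partial^{k} g
\]
for every holomorphic $g$ and every integer $k\ge 0$.

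To establish this, I would pair $\bar z^{k} g$ against an arbitrary holomorphic test function $h$ in Bargmann and compute
\[
\langle h,\,\bar z^{k} g\rangle = \int \overline{h(z)}\,\bar z^{k}\, g(z)\, e^{-|z|^2}\, dA(z).
\]
A one-line induction from $\partial_z|z|^2 = \bar z$ and $\partial_z \bar z = 0$ yields the crucial identity $\bar z^{k}\, e^{-|z|^2} = (-1)^{k}\, \partial_z^{k}\, e^{-|z|^2}$. Substituting this and integrating by parts $k$ times in $z$ shifts all derivatives onto the product $\overline{h(z)}\,g(z)$; since $\overline{h(z)}$ is anti-holomorphic it is annihilated by $\partial_z$, so every derivative lands on $g$, giving $\langle h,\,\bar z^{k} g\rangle = \langle h,\,\partial^{k} g\rangle$. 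Because $h$ is arbitrary in Bargmann and $\partial^{k} g$ is itself holomorphic, this pins down $\Pi_{\mathrm{hol}}(\bar z^{k} g) = \partial^{k} g$.

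Summing this identity over $k=0,\dots,n$ with the coefficients $g_k$ and restoring the Gaussian factor then yields the announced formula for $\Pi_0\phi$. The hypothesis $\phi\in\bigoplus_{k=0}^n k\mathrm{LL}$ is used only to guarantee the polynomial-in-$\bar z$ representation~\eqref{135} with holomorphic $g_k$ (applying Proposition~\ref{lem:simple} to each Landau-level summand) and to provide enough decay of the $g_k$ against the Gaussian weight to justify the integrations by parts. The only non-routine point worth flagging is the sign/combinatorial bookkeeping in $\partial_z^{k} e^{-|z|^2} = (-1)^{k}\, \bar z^{k}\, e^{-|z|^2}$ together with verifying that boundary contributions at infinity vanish under the Gaussian weight, both of which are standard.
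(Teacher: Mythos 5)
Your argument is correct, but it is genuinely different from the one in the paper. The paper proceeds structurally: it uses Proposition \ref{lem:simple} to peel off the $n$LL component of $\phi$ (setting $f_n:=g_n$ and reconstructing the lower coefficients via \eqref{134}), iterates downward through the Landau levels, and finally extracts the $k=0$ term with the help of the binomial identity $\sum_{k=0}^n\binom{n}{k}(-1)^k=0$; as a by-product it yields the full decomposition $\phi=\sum_k\psi_k$ into Landau-level components, not just $\Pi_0\phi$. You instead prove the single clean identity $\Pi_{\mathrm{hol}}(\bar z^k g)=\partial^k g$ by duality in the Bargmann space, using $\bar z^k e^{-|z|^2}=(-1)^k\partial_z^k e^{-|z|^2}$, $k$ integrations by parts, and the fact that $\partial_z$ annihilates the anti-holomorphic factor $\overline{h(z)}$; your computation checks out and directly justifies the ``replace $\bar z$ by $\partial_z$'' recipe quoted after the proposition, without needing the iterative splitting or the binomial identity. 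What your route requires in exchange is the analytic bookkeeping you rightly flag: one must know that $\partial^k g_k$ lies in the Bargmann space and that the boundary terms at infinity vanish, which is where the hypothesis $\phi\in\bigoplus_{k=0}^n k\mathrm{LL}$ (rather than mere square-integrability of \eqref{135}) does real work — the paper's algebraic route sidesteps this entirely because it only ever manipulates exact Landau-level eigenfunctions. One cosmetic remark: your appeal to Proposition \ref{lem:simple} to ``guarantee'' the representation \eqref{135} is unnecessary, since that representation is part of the hypothesis.
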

This is known as the recipe ``move all $\bar{z}$ factors to the left and replace them by derivatives in $z$'', see e.g.~\cite{Jain-07}.
\begin{proof}
The previous considerations lead by induction to a splitting of  the state \eqref{135}
into its components in the different $k$LL, $k=0,\dots,n$: Start with a wave function as in~\eqref{135}. It's component $\psi_n$ in the $n$LL is  given by \eqref{mainformula} with $f_n:=g_n$ and $f_k$ for $0\leq k\leq n-1$ defined by~\eqref{134}. The difference
$\tilde \phi=\phi-\psi_n$ is now in $\bigoplus_{k=0}^{n-1} k\mathrm{LL}$ and we can repeat the procedure with $n$ replaced by $n-1$, $\phi$ by $\tilde\phi$  etc. until we obtain the splitting
$\phi=\sum_{k=0}^n \psi_{k}$
with $\psi_{k}\in k\mathrm{LL}$.
By induction over $n$, using that 
$
\sum_{k=0}^n {n\choose k} (-1)^k=(1-1)^n=0$,
this procedure implies~\eqref{eq:LLL projection}.
\end{proof}


\section{Many body states and $\ell$-particle densities}\label{densities}

The considerations of the previous sections carry in a straightforward manner over to many-body states in symmetric or anti-symmetric tensor powers $\nLL^N\equiv
 n$LL$^{\otimes_{{\rm s,a}}N}$ of single-particle states by applying the single-particle formulas to each tensor factor. 
 
Let $\Psi_n$ be a state in $\nLL^N$ with wave function 
\beq
\Psi_n(\rv_1, \dots, \rv_N)=\widehat{\Psi}_n(\rv_1, \dots, \rv_N)e^{-(|z_1|^2+\cdots +|z_N|^2)/2}.
\eeq
Expanding in powers of $\bar z_i$ we can write
\beq 
\widehat{\Psi}_n(\rv_1, \dots, \rv_N)=\prod_{i=1}^N \bar z_i^n f_n(z_1,\dots, z_N)+\sum\prod_{i=1}^N\bar z_i^{k_i} f_{k_1,\dots, k_N} (z_1,\dots, z_N).
\eeq
The sum is here over $N$-tuples $(k_1,\dots k_N)$ such that $k_i<n$ for at least one $i$. The functions $f_n$ and $f_{k_1,\dots, k_N}$ are holomorphic and the latter ones are, in fact, derivatives of $f_n$, cf. \eqref{134}.

Consider now the state $\Psi_0=\mathcal U_{N,n}\Psi_n$ in LLL$^N$ where 
\beq \mathcal U_{N,n}=U_n\otimes\cdots\otimes U_n\quad\hbox{with } U_n \hbox{ defined by \eqref{Un}}.\label{47}\eeq
Its wave function is
\beq \Psi_0(\rv_1, \dots, \rv_N)=\widehat \Psi_0(z_1,\dots, z_N)e^{-(|z_1|^2+\cdots +|z_N|^2)/2}\eeq
with the holomorphic function
\beq \widehat \Psi_0(z_1,\dots, z_N)=(n!)^{-N/2}\prod_{i=1}^N \partial_{\bar z_i}^n \, \widehat \Psi_n(z_1,\bar z_1; \dots ;z_N,\bar z_N)=
(n!)^{N/2}f_n(z_1,\dots, z_N).
\eeq 
Now $\Psi_n=\mathcal U_{N,n}^{-1}\Psi_0$,  so the  wave function $\widehat\Psi_n$ can by \eqref{Un-1} and \eqref{hata} be expressed in terms of the holomorphic function $f_n$:
\beq \widehat\Psi_n(z_1,\bar z_1; \dots ;z_N,\bar z_N)=(n!)^{-N/2}\prod_{i=1}^N (\bar z_i-\partial_{z_i})^n\widehat \Psi_0(z_1,\dots, z_N)=
\prod_{i=1}^N (\bar z_i-\partial_{z_i})^nf_n(z_1,\dots, z_N).\eeq

Next we consider {\em $\ell$-particle densities}, $\ell=1,2,\dots$,  which for a general
$N$-particle wave function $\Psi$ are defined by
\beq 
\rho^{(\ell)}_\Psi(\mathbf r_1,\dots, \mathbf r_\ell)= {N \choose \ell} \int_{\mathbb R^{2(N-\ell)}} |\Psi(\mathbf r_1\cdots \rv_\ell;\mathbf r'_{\ell+1}\cdots \mathbf r'_N)|^2\mathrm d\mathbf r'_{\ell+1}\cdots  \mathrm d\mathbf r'_N.
\eeq\smallskip

\begin{theorem}[\textbf{Particle densities in the $n$-th Landau level}]\label{thm:main2}\mbox{}\\
{\em The $\ell$-particle densities of $\Psi_n\in n{\rm LL}^N$ and  $\Psi_0 =\mathcal U_{N,n}\Psi_n\in {\rm LLL}^N$ are connected by 
\beq\label{eq:nLLdens}\rho^{(\ell)}_{\Psi_n}(\mathbf r_1,\dots, \mathbf r_\ell)
= \prod_{i=1}^\ell L_n \left(-\mbox{$\frac 14$}\Delta_{\mathbf r_i}\right)\rho^{(\ell)}_{\Psi_0}(\mathbf r_1,\dots, \mathbf r_\ell)
\eeq
where $L_n$ is the $n$-th Laguerre polynomial \beq L_n (t) = \sum_{l=0} ^n {n\choose l} \frac{(-t)^l}{l!}.\label{Laguerre}\eeq }
\end{theorem}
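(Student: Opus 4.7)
The plan is to pass to Fourier space in the position variables and exploit the factorization Lemma~\ref{lem:factorization}. Defining $\widehat\rho^{(\ell)}_\Psi$ as the Fourier transform of $\rho^{(\ell)}_\Psi$ in all variables, the definition of $\rho^{(\ell)}$ gives directly
\[
\widehat\rho^{(\ell)}_\Psi(\mathbf k_1,\dots,\mathbf k_\ell) = \binom{N}{\ell}\,\langle\Psi|e^{-\im\mathbf k_1\cdot\mathbf r_1}\cdots e^{-\im\mathbf k_\ell\cdot\mathbf r_\ell}|\Psi\rangle,
\]
so the theorem reduces to a single-particle identity: for every $m,m'$,
\beq\label{eq:target}
\langle \varphi_{n,m'}|e^{-\im\mathbf k\cdot\mathbf r}|\varphi_{n,m}\rangle = L_n(|\mathbf k|^2\ell_B^2/2)\,\langle \varphi_{0,m'}|e^{-\im\mathbf k\cdot\mathbf r}|\varphi_{0,m}\rangle.
\eeq
Granted~\eqref{eq:target}, expand $\Psi_n$ in the product basis $\varphi_{n,m_1}\otimes\cdots\otimes\varphi_{n,m_N}$. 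By the definition~\eqref{47} of $\mathcal U_{N,n}$ as a tensor power of $U_n$, the state $\Psi_0$ has the \emph{same} coefficients in the corresponding LLL basis. Since the $\ell$ exponentials act on $\ell$ \emph{different} tensor factors, applying~\eqref{eq:target} factor by factor yields $\widehat\rho^{(\ell)}_{\Psi_n}(\mathbf k_1,\dots,\mathbf k_\ell) = \prod_{j=1}^\ell L_n(|\mathbf k_j|^2\ell_B^2/2)\,\widehat\rho^{(\ell)}_{\Psi_0}(\mathbf k_1,\dots,\mathbf k_\ell)$.

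To prove~\eqref{eq:target}, I invoke the gauge-invariant splitting $\mathbf r = \mathbf R + \widetilde{\mathbf R}$ from~\eqref{splitting}. The two pieces commute, so $e^{-\im\mathbf k\cdot\mathbf r} = e^{-\im\mathbf k\cdot\mathbf R}\,e^{-\im\mathbf k\cdot\widetilde{\mathbf R}}$, with the first factor a function of $b^\#$ alone and the second of $a^\#$ alone. Lemma~\ref{lem:factorization} then gives
\[
\langle \varphi_{n,m'}|e^{-\im\mathbf k\cdot\mathbf r}|\varphi_{n,m}\rangle = \langle\varphi_{n,0}|e^{-\im\mathbf k\cdot\widetilde{\mathbf R}}|\varphi_{n,0}\rangle\,\langle\varphi_{0,m'}|e^{-\im\mathbf k\cdot\mathbf R}|\varphi_{0,m}\rangle,
\]
and the corresponding LLL identity has the same guiding-centre factor. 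Dividing the two reduces~\eqref{eq:target} to the purely cyclotron-oscillator identity
\beq\label{eq:cyclotron}
\langle \varphi_{n,0}|e^{-\im\mathbf k\cdot\widetilde{\mathbf R}}|\varphi_{n,0}\rangle = L_n(|\mathbf k|^2\ell_B^2/2)\,e^{-|\mathbf k|^2\ell_B^2/4}.
\eeq

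The identity~\eqref{eq:cyclotron} is the main technical input. From~\eqref{aa} the exponent $-\im\mathbf k\cdot\widetilde{\mathbf R}$ is linear in $a,a^\dagger$, so $e^{-\im\mathbf k\cdot\widetilde{\mathbf R}}$ is a Weyl displacement operator $D(\beta)=e^{\beta a^\dagger-\bar\beta a}$ with $|\beta|^2 = |\mathbf k|^2\ell_B^2/2$. The standard Fock-space formula
\[
\langle \varphi_{n,0}|D(\beta)|\varphi_{n,0}\rangle = e^{-|\beta|^2/2}\,L_n(|\beta|^2),
\]
derived from the Baker--Campbell--Hausdorff factorization $D(\beta) = e^{-|\beta|^2/2}e^{\beta a^\dagger}e^{-\bar\beta a}$ and a brief power-series manipulation, then yields~\eqref{eq:cyclotron} after matching conventions with~\eqref{Laguerre}. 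Finally, inverse Fourier transform sends $|\mathbf k_j|^2\mapsto -\Delta_{\mathbf r_j}$, and the normalisation $B=2$ used in~\eqref{a-operators} gives $\ell_B^2 = 1/2$, so $L_n(|\mathbf k_j|^2\ell_B^2/2)$ becomes the differential operator $L_n(-\Delta_{\mathbf r_j}/4)$, producing~\eqref{eq:nLLdens}. The sole non-trivial ingredient is the Weyl matrix-element formula; everything else is clean Fourier/factorization bookkeeping.
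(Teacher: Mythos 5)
Your argument is correct, and it is a genuinely different route from the one the paper uses to prove Theorem~\ref{thm:main2}. The paper's own proof stays in position space: it rests on Lemma~\ref{lem:proj dens} (``reshuffling differentiations''), the pointwise identity $(n!)^{-1}|(\bar z-\partial_z)^nf|^2e^{-z\bar z}=L_n(-\partial_{\bar z}\partial_z)\bigl[|f|^2e^{-z\bar z}\bigr]$ for holomorphic $f$, proved by induction on $n$ via the three-term Laguerre recursion, and then applied to the first $\ell$ variables of the explicit representation $\widehat\Psi_n=\prod_i(\bar z_i-\partial_{z_i})^n f_n$. You instead Fourier-transform the density, reduce to diagonal matrix elements of $e^{-\im\mathbf k\cdot\rv}$, split $\rv=\Rv+\widetilde{\Rv}$, factor via Lemma~\ref{lem:factorization}, and evaluate the cyclotron factor as a Weyl displacement matrix element $\langle n|D(\beta)|n\rangle=e^{-|\beta|^2/2}L_n(|\beta|^2)$. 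All the steps check out: the exponentials act on distinct tensor factors so the $m$-independent prefactor $L_n(|\mathbf k_j|^2\ell_B^2/2)$ pulls out of the basis expansion, and your constants ($|\beta|^2=|\mathbf k|^2\ell_B^2/2=|\mathbf k|^2/4$ for $B=2$) agree with the paper's. In fact this is precisely the strategy the paper itself deploys in Section~6.1 to prove Theorem~\ref{thm:main}: your single-particle identity is the paper's Lemma~\ref{lem:plane}, which the paper explicitly describes as ``a Fourier transformed version of~\eqref{eq:nLLdens} (with $\ell=1$)''; you have simply run that observation in reverse to obtain the density formula for all $\ell$. What each approach buys: the paper's induction is elementary and self-contained and gives the pointwise statement directly; yours makes the guiding-center/cyclotron physics and the origin of the Laguerre polynomial transparent and delivers the effective-potential formulas \eqref{eq:eff pot}--\eqref{eq:eff int} as a byproduct. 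Two small caveats, neither fatal: the Weyl matrix-element formula you call the ``sole non-trivial ingredient'' is itself proved by a BCH-plus-power-series computation of essentially the same weight as Lemma~\ref{lem:proj dens}, so the hard kernel of the two proofs is comparable; and converting the Fourier multiplier $L_n(|\mathbf k_j|^2/4)$ back into the differential operator $L_n(-\frac14\Delta_{\rv_j})$ acting on $\rho^{(\ell)}_{\Psi_0}$ tacitly uses the smoothness of LLL densities (holomorphic functions times Gaussians), which you should state.
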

\smallskip

The proof is a simple consequence of the following Lemma:
\begin{lemma}[\textbf{Reshuffling differentiations}]\label{lem:proj dens}\mbox{}\\
{\em Let  $f$ be a holomorphic function. 
Then
\begin{equation}
(n!)^{-1}\overline {\left[(\bar z-\partial_{z})^n f(z)\right]}\left[(\bar z-\partial_{z})^n f(z)\right]e^{-z \bar z}= 
L_n\left(- \partial_{\bar z}\partial_z\right)\left[\overline{f(z)}f(z)e^{-z \bar z}\right].\label{nlift0} 
\end{equation}}
\end{lemma}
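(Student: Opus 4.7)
My strategy is a Gaussian similarity transformation that replaces $\bar z-\partial_z$ by a plain $\partial_z$, followed by testing the resulting identity on the holomorphic family $f(z)=e^{tz}$ to reduce the claim to a purely analytic Laguerre identity.

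First I would exploit the elementary relation $\partial_z(e^{-z\bar z}h) = e^{-z\bar z}(\partial_z-\bar z)h$, i.e.~the operator identity $(\bar z-\partial_z) = -\,e^{z\bar z}\partial_z\,e^{-z\bar z}$. Iterating and using $\partial_{\bar z}f=0$ gives
\[
(\bar z-\partial_z)^n f = (-1)^n\,e^{z\bar z}\,\partial_z^n\!\bigl(e^{-z\bar z}f\bigr),
\]
together with its complex-conjugate counterpart for $\bar f$. Multiplying these two and inserting the factor $e^{-z\bar z}$, the left-hand side of \eqref{nlift0} rewrites as
\[
\frac{1}{n!}\,e^{z\bar z}\,\partial_z^n\!\bigl(e^{-z\bar z}f\bigr)\,\partial_{\bar z}^n\!\bigl(e^{-z\bar z}\bar f\bigr).
\]

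Next I would observe that at any fixed point $(z,\bar z)$ both sides of \eqref{nlift0} are polynomial expressions in the finite jets $\{f^{(k)}(z)\}_{k=0}^n$ and $\{\overline{f^{(k)}(z)}\}_{k=0}^n$. It therefore suffices to verify the identity on a family of holomorphic $f$ whose jets span all possibilities --- the exponentials $f(z)=e^{tz}$, $t\in\C$, do the job. For such $f$, $(\bar z-\partial_z)^n e^{tz}=(\bar z-t)^n e^{tz}$, and the algebraic identity $tz+\bar t\bar z-z\bar z = t\bar t-(z-\bar t)(\bar z-t)$ lets me change variables to $w:=z-\bar t$, $\bar w:=\bar z-t$, so that $\partial_z=\partial_w$ and $\partial_{\bar z}=\partial_{\bar w}$. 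After factoring out the constant $e^{t\bar t}$, the lemma collapses to the single identity
\[
L_n\!\bigl(-\partial_{\bar w}\partial_w\bigr)\,e^{-w\bar w} \;=\; \frac{1}{n!}(w\bar w)^n\,e^{-w\bar w}.
\]

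To close, I would verify this by direct calculation: $\partial_w^k e^{-w\bar w}=(-\bar w)^k e^{-w\bar w}$, and Leibniz applied to $\partial_{\bar w}^k(\bar w^k e^{-w\bar w})$ yields the Rodrigues-type formula $(\partial_{\bar w}\partial_w)^k e^{-w\bar w} = (-1)^k k!\,L_k(w\bar w)\,e^{-w\bar w}$. Substituting this into $L_n(-x)=\sum_{k=0}^n \binom{n}{k}x^k/k!$ reduces the remaining identity to the combinatorial statement $\sum_{k=0}^n(-1)^k\binom{n}{k}L_k(u)=u^n/n!$, which follows by expanding each $L_k$ in monomials and collapsing via $\sum_\ell(-1)^\ell\binom{m}{\ell}=\delta_{m,0}$. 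The step I expect to require care is the reduction from general holomorphic $f$ to the exponential ansatz --- one must articulate cleanly why a jet-polynomial identity verified on $\{e^{tz}\}_{t\in\C}$ forces the full result. Once that is granted, the rest is standard Laguerre combinatorics.
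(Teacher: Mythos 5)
Your argument is correct, and it is genuinely different from the one in the paper. The paper proves \eqref{nlift0} by induction over $n$: one computes $\partial_{\bar z}\partial_z$ of the left-hand side using the commutators $\partial_{\bar z}(\bar z-\partial_z)^n=(\bar z-\partial_z)^n\partial_{\bar z}+n(\bar z-\partial_z)^{n-1}$ and closes the induction with the three-term recursion $(n+1)L_{n+1}(u)=(2n+1)L_n(u)-nL_{n-1}(u)-uL_n(u)$. You instead conjugate by the Gaussian to get the Rodrigues-type representation $(\bar z-\partial_z)^nf=(-1)^n e^{z\bar z}\partial_z^n(e^{-z\bar z}f)$, reduce to the exponential family, and finish with the binomial-transform identity $\sum_{k=0}^n(-1)^k\binom{n}{k}L_k(u)=u^n/n!$. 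All three ingredients check out, including the key reduction step you flag: both sides of \eqref{nlift0} at a fixed point are sesquilinear forms $\sum_{j,k\le n}c_{jk}(z,\bar z)\,f^{(j)}(z)\overline{f^{(k)}(z)}$ with coefficients independent of $f$ (the right-hand side because $\partial_z,\partial_{\bar z}$ preserve bilinearity in $(f,\bar f)$ and $(\partial_{\bar z}\partial_z)^l$ with $l\le n$ touches at most the $n$-jet); on $f=e^{tz}$ the difference of the two sides becomes $e^{tz+\bar t\bar z}\sum_{j,k}(c_{jk}-c'_{jk})t^j\bar t^k$, and linear independence of the monomials $t^j\bar t^k$ as functions of $t\in\C$ forces all coefficients to vanish. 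It is worth spelling that sentence out in the write-up, since it is the only non-mechanical step. What your route buys: it isolates the lemma's content in a single translation-invariant identity $L_n(-\partial_{\bar w}\partial_w)e^{-w\bar w}=(n!)^{-1}(w\bar w)^ne^{-w\bar w}$ (the case of a ``centered'' basis function), shows the general case follows by the affine shift $w=z-\bar t$, and replaces the Laguerre recursion by an explicit Rodrigues formula plus elementary binomial combinatorics; moreover the test functions $e^{tz}$ are essentially the coherent states of Section 7, so the argument sits naturally beside the coherent-state picture. The paper's induction is shorter to write and avoids the density argument, at the price of being a less transparent computation.
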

\begin{proof} This is a straightforward computation by induction over $n$, using the recursion relation for the Laguerre polynomials,
\beq (n+1)L_{n+1}(u)=(2n+1)L_n(u)-nL_{n-1}(u)-uL_n(u).\eeq
To compute \beq\partial_{\bar z}\partial_z\left[ \overline {\left[(\bar z-\partial_{z})^n f(z)\right]}\left[(\bar z-\partial_{z})^n f(z)\right]e^{-z \bar z}\right],\eeq
starting with $n=0$ and $L_0=1$, one uses the commutation relations
\beq
\partial_{\bar z} (\bar z-\partial_z)^n=(\bar z-\partial)^n\partial_{\bar z}+n(\bar z-\partial_z)^{n-1},\quad \partial_{z} (\bar z-\partial_z)^n=(\bar z-\partial)^n\partial_{z}, 
\eeq
and the fact that $\partial_{\bar z}f(z)=\partial_{z}\overline{f(z)}=0$ for holomorphic $f$. 
\end{proof}
 Theorem \ref{thm:main2} follows by applying the Lemma to the  first $\ell$ variables of $\Psi_0$,
 noting  that $\partial_{\bar z} \partial_z=\frac 14 \Delta$. 

\section{Mapping Hamiltonians to the LLL}

Consider an $N$-body Hamiltonian of the form 
\beq H_{V,w}=\sum_{i=1}^N \left[H^{(i)} +V(\rv_i)\right ]+\sum_{i<j} w(\rv_i-\rv_j)\label{eq:full hamil}\eeq
where $H^{(i)}$ is the 1-particle Landau Hamiltonian \eqref{magnham} for particle $i$, $V$ is an external potential, and $w$ a translationally invariant two-body interaction potential. 

A common situation in Quantum Hall physics is that the first
$n-1$ Landau levels are filled with electrons and can be regarded as \lq\lq inert\rq\rq\ while the interesting phenomena take place in the tensor factor of the state in the $n$-th Landau level.
The formulas of the last section, which establish a unitary correspondence between states and densities in different Landau levels, make it possible to reduce the study of the Hamiltonian in the $n$-the level to a unitarily equivalent Hamiltonian in the lowest Landau level where states are described by holomorphic wave functions.  This reduction is accompanied by a transformation of the potentials $V$ and $w$. The precise statement is as follows:
\begin{theorem}[\textbf{Effective Hamiltonians in different LL's}] \label{thm:main}\mbox{}\\
{\em Let $H_{V,w}$ be given by~\eqref{eq:full hamil} and define
\beq H^{\nLL}_{V,w} = \Pi_n^N H_{V,w} \Pi_n^N \eeq
where $\Pi_n^N$ is the orthogonal projector from $L^2_{\as}(\R^{2N})$ to $\nLL^N$.
Then, for any $n$, 
\beq H^{\nLL}_{V,w}- n\cdot 2BN \eeq
is unitarily equivalent to 
\beq H^{\LLL}_{V^{(n)},w^{(n)}} = \Pi_0^N H_{V^{(n)},w^{(n)}}\Pi_0^N\label{61}\eeq
with $\Pi_0^N$  the projector onto the lowest Landau level $\LLL^N$  and the  effective potentials 
\begin{align} V^{(n)} (\rv) &= L_n \left(-\mbox{$\frac 14$}{\Delta} \right) V (\rv) \label{eq:eff pot}
\\ 
 w^{(n)} (\rv) &= L_n \left(-\mbox{$\frac 14$} {\Delta}\right) ^2 w (\rv)\label{eq:eff int},
\end{align}
where $\Delta$ is the Laplacian and $L_n$ the Laguerre polynomial \eqref{Laguerre}.
More generally, one can consider interactions beyond the two-body case: For an $\ell$ particle interaction the $L_n \left(-\mbox{$\frac 14$} {\Delta}\right) ^2$ in \eqref{eq:eff int} is replaced by a product of $L_n \left(-\mbox{$\frac 14$} {\Delta_i}\right)$, $i=1,\dots,\ell$.}
\end{theorem}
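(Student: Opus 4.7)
The natural candidate unitary is $\mathcal{U}_{N,n}=U_n^{\otimes N}$ from \eqref{47}, which restricts to an isometric isomorphism $\nLL^N\to\LLL^N$ and preserves antisymmetry since it acts identically on each tensor factor. The plan is to verify the claimed equivalence at the level of expectation values: for arbitrary $\Psi_n\in\nLL^N$ with image $\Psi_0=\mathcal{U}_{N,n}\Psi_n\in\LLL^N$, show that
\begin{equation*}
\langle\Psi_n|H_{V,w}|\Psi_n\rangle - 2nBN\|\Psi_n\|^2 \;=\; \langle\Psi_0|H_{V^{(n)},w^{(n)}}|\Psi_0\rangle,
\end{equation*}
and then promote this by polarization to the operator identity $\mathcal{U}_{N,n}\bigl(H^{\nLL}_{V,w}-2nBN\bigr)\mathcal{U}_{N,n}^{-1}=H^{\LLL}_{V^{(n)},w^{(n)}}$. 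I would then treat the three pieces of $H_{V,w}$ separately.

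For the kinetic piece $\sum_i H^{(i)}$, each $H^{(i)}$ acts as the scalar $(2n+1)B$ on $\nLL$, so the sum equals $(2n+1)BN$ on $\nLL^N$; subtracting $2nBN$ leaves $BN$, which is exactly the value of $\sum_i H^{(i)}$ on $\LLL^N$, and this scalar is trivially intertwined by $\mathcal{U}_{N,n}$. For the one-body potential, symmetry of $|\Psi_n|^2$ gives $\langle\Psi_n|\sum_i V(\mathbf{r}_i)|\Psi_n\rangle = \int V(\mathbf{r})\,\rho^{(1)}_{\Psi_n}(\mathbf{r})\,d\mathbf{r}$. Theorem~\ref{thm:main2} then yields $\rho^{(1)}_{\Psi_n}=L_n(-\Delta/4)\rho^{(1)}_{\Psi_0}$, and integrating by parts (boundary terms vanish thanks to the Gaussian decay that $\rho^{(1)}_{\Psi_0}$ inherits from the LLL wave function) transfers the Laguerre polynomial onto $V$, producing precisely $V^{(n)}$ as in \eqref{eq:eff pot}.

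For the two-body and general $\ell$-body interactions, the same strategy applies using the $\ell$-particle density. Theorem~\ref{thm:main2} gives $\rho^{(\ell)}_{\Psi_n}=\prod_{i=1}^\ell L_n(-\Delta_{\mathbf{r}_i}/4)\,\rho^{(\ell)}_{\Psi_0}$, and repeated integration by parts transfers the full product of Laguerre operators onto the interaction kernel. This already yields the $\ell$-body formula stated at the end of the theorem. For $\ell=2$ one additionally uses translation invariance of $w$: since $w=w(\mathbf{r}_1-\mathbf{r}_2)$, one has $\Delta_{\mathbf{r}_1}w=\Delta_{\mathbf{r}_2}w=(\Delta w)(\mathbf{r}_1-\mathbf{r}_2)$, so the product of the two Laguerre operators collapses to $L_n(-\Delta/4)^2 w = w^{(n)}$, recovering \eqref{eq:eff int}.

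I expect the main obstacle to be analytic rather than algebraic: justifying the integration by parts and treating $L_n(-\Delta/4)$ as a self-adjoint polynomial differential operator on a class of potentials that covers the intended applications. For $V,w$ that are Schwartz or polynomially bounded this is immediate, while singular interactions (Coulomb, $\delta$-type pseudopotentials) require a separate regularization-and-pass-to-the-limit argument with control on the $L^1$-norms of the derivatives of the relevant densities. The algebraic content of the theorem, by contrast, is essentially already packaged in Theorem~\ref{thm:main2}, whose key Lemma~\ref{lem:proj dens} carries out the one nontrivial computation via induction on $n$ and the three-term recursion for the Laguerre polynomials.
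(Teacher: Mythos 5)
Your proposal is correct and follows essentially the same route as the paper's own (very brief) proof: compare expectation values in $\Psi_n\in\nLL^N$ and $\Psi_0=\mathcal U_{N,n}\Psi_n\in\LLL^N$, invoke the density transformation formula \eqref{eq:nLLdens} of Theorem~\ref{thm:main2}, and move the Laguerre differential operators onto the potentials by integration by parts, with the translation invariance of $w$ collapsing the two one-particle Laguerre operators into the square in \eqref{eq:eff int}. Your treatment of the kinetic shift $2nBN$ and your remarks on the analytic justification for singular potentials are correct supplements to what the paper leaves implicit; the paper's alternative second proof via projected plane waves and Baker--Campbell--Hausdorff is a genuinely different argument, but your chosen path is the one the paper designates as its primary proof.
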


\begin{proof}
The proof follows directly from a comparison of expectation values of $H^{\nLL}_{V,w}$ in a state $\Psi_n\in n{\rm LL}^N$ with the corresponding state $\Psi_0=\mathcal U_{N,n}^{-1}\Psi_n \in {\rm LLL}^N$, using the transformation formula \eqref{eq:nLLdens} for one- and two-particle densities (more generally, $\ell$-particle densities). 
\end{proof}

{\bf Remark}. This is the basic result that allows to study interaction in higher LL's in terms of effective interactions in the LLL. It can be found in different guises in a number of sources, in particular~\cite{Haldane-18,CheBis-18,Tong-16,Jain-07, MacDonaldGirvin-86, CifQui-10}.  

\subsection{Another proof of Theorem 5.1.}

An alternative approach to~\eqref{eq:eff pot} and~\eqref{eq:eff int} is based on  the splitting \eqref{splitting} of the position variables in guiding centers and cyclotron motion and the factorization $\exp({\mathrm i}\mathbf q\cdot \mathbf r)= \exp({\mathrm i}\mathbf q\cdot \mathbf R)\exp({\mathrm i}\mathbf q\cdot  \tilde {\mathbf R})$. This factorization carries over to matrix elements by Lemma \ref{lem:factorization}. 

\begin{lemma}[\textbf{Plane waves projected in different Landau levels}]\label{lem:plane}\mbox{}\\
{\em For any $\mathbf{q}\in \R ^2$, identify $e^{\im \mathbf{q}\cdot \rv}$ with the corresponding multiplication operator on $L^2 (\R^2)$, where $\rv$ is the spatial variable. Let $\mathbf{R}$ be the guiding center operator defined in Sec.~\ref{sec:guiding}, $\Pi_n$ the orthogonal projector on $\nLL$ and $U_n:\nLL \rightarrow \LLL$ the unitary map \eqref{Un}.
Then, in the LLL,
\begin{equation}\label{eq:proj wave}
U_n \Pi_n e^{\im \mathbf{q}\cdot \rv} \Pi_n U_n ^{-1} = L_n \left(\frac{|\mathbf{q}|^2}{4}\right) e^{-\frac{|\mathbf{q}|^2}{8}} \Pi_0  e^{\im \mathbf{q}\cdot \mathbf{R}} \Pi_0 = L_n \left(\frac{|\mathbf{q}|^2}{4}\right)  \Pi_0  e^{\im \mathbf{q}\cdot \mathbf{r}} \Pi_0 
\end{equation}
with $L_n$  the Laguerre polynomial \eqref{Laguerre}.}
\end{lemma}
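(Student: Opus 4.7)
The plan is to use the splitting $\mathbf r=\mathbf R+\widetilde{\mathbf R}$ of Section \ref{sec:guiding} and the consequent operator factorisation
\[
e^{\im \mathbf q\cdot\mathbf r}=e^{\im \mathbf q\cdot\mathbf R}\,e^{\im \mathbf q\cdot\widetilde{\mathbf R}},
\]
which is legitimate because $[\mathbf R,\widetilde{\mathbf R}]=\mathbf 0$ by \eqref{CCR}. The first factor depends only on the guiding-center operators $b,b^\dagger$ (cf.~\eqref{b}), the second only on the cyclotron operators $a,a^\dagger$ (cf.~\eqref{aa}). Lemma \ref{lem:factorization} therefore yields, for all $m,m'\ge 0$,
\[
\langle\varphi_{n,m'}|e^{\im \mathbf q\cdot\mathbf r}|\varphi_{n,m}\rangle=\langle\varphi_{n,0}|e^{\im \mathbf q\cdot\widetilde{\mathbf R}}|\varphi_{n,0}\rangle\,\langle\varphi_{0,m'}|e^{\im \mathbf q\cdot\mathbf R}|\varphi_{0,m}\rangle.
\]

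Since $U_n$ is the isometry $\varphi_{n,m}\mapsto\varphi_{0,m}$ from $\nLL$ onto $\LLL$, the left-hand side above is exactly the $(m',m)$ entry of $U_n\Pi_n e^{\im\mathbf q\cdot\mathbf r}\Pi_n U_n^{-1}$ in the $\LLL$ basis $\{\varphi_{0,m}\}$, while the guiding-center factor on the right is already a matrix element of $\Pi_0 e^{\im\mathbf q\cdot\mathbf R}\Pi_0$. Hence, once we show that the cyclotron factor is the $(m,m')$-independent scalar $L_n(|\mathbf q|^2/4)e^{-|\mathbf q|^2/8}$, the first equality in \eqref{eq:proj wave} follows by comparison of matrix elements in $\LLL$.

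To evaluate this scalar, use \eqref{aa} (with $B=2$, $\ell_B^2=\tfrac12$) to write
\[
\im\mathbf q\cdot\widetilde{\mathbf R}=\alpha a^\dagger-\bar\alpha a,\qquad \alpha=\tfrac{\im}{2}(q_x+\im q_y),\qquad |\alpha|^2=\tfrac14|\mathbf q|^2.
\]
Because $[\alpha a^\dagger,-\bar\alpha a]=|\alpha|^2$ is central, the Baker--Campbell--Hausdorff identity gives the standard normal-ordered form
\[
e^{\alpha a^\dagger-\bar\alpha a}=e^{-|\alpha|^2/2}\,e^{\alpha a^\dagger}\,e^{-\bar\alpha a}.
\]
Expanding both exponentials in the oscillator number basis and using $a^k\varphi_{n,0}=\sqrt{n!/(n-k)!}\,\varphi_{n-k,0}$ for $0\le k\le n$, together with orthonormality, collapses the double series to
\[
\langle\varphi_{n,0}|e^{\alpha a^\dagger-\bar\alpha a}|\varphi_{n,0}\rangle=e^{-|\alpha|^2/2}\sum_{k=0}^{n}\binom{n}{k}\frac{(-|\alpha|^2)^k}{k!}=e^{-|\alpha|^2/2}L_n(|\alpha|^2)
\]
by the explicit formula \eqref{Laguerre}. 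Substituting $|\alpha|^2=|\mathbf q|^2/4$ gives the required value.

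For the second equality in \eqref{eq:proj wave}, specialise the same factorisation argument to $n=0$: since $L_0\equiv 1$, one obtains $\Pi_0 e^{\im\mathbf q\cdot\mathbf r}\Pi_0=e^{-|\mathbf q|^2/8}\Pi_0 e^{\im\mathbf q\cdot\mathbf R}\Pi_0$, and inserting this into the first equality absorbs the Gaussian factor and leaves only $L_n(|\mathbf q|^2/4)$ in front of $\Pi_0 e^{\im\mathbf q\cdot\mathbf r}\Pi_0$. The only delicate bookkeeping is the identification of $\alpha$ and the verification $|\alpha|^2=|\mathbf q|^2/4$ in the chosen units; the diagonal coherent-state amplitude $\langle n|D(\alpha)|n\rangle=e^{-|\alpha|^2/2}L_n(|\alpha|^2)$ is the textbook step, and everything else is an organisational use of Lemma \ref{lem:factorization}.
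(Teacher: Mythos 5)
Your proof is correct and follows essentially the same route the paper takes: factorising $e^{\im\mathbf q\cdot\mathbf r}=e^{\im\mathbf q\cdot\mathbf R}e^{\im\mathbf q\cdot\widetilde{\mathbf R}}$ via the commuting splitting \eqref{splitting}, applying Lemma~\ref{lem:factorization} to the matrix elements, and evaluating the cyclotron factor with the Baker--Campbell--Hausdorff formula and the CCR. The only difference is that you carry out explicitly the displaced-number-state computation $\langle\varphi_{n,0}|e^{\alpha a^\dagger-\bar\alpha a}|\varphi_{n,0}\rangle=e^{-|\alpha|^2/2}L_n(|\alpha|^2)$, which the paper delegates to the cited references, and your identification $|\alpha|^2=|\mathbf q|^2/4$ in the units $B=2$ is correct.
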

The equality of the left-hand and the right-hand sides of~\eqref{eq:proj wave} can be seen as a Fourier transformed version of~\eqref{eq:nLLdens} (with $\ell=1$). 

The proof of the Lemma is a simple computation using the commutation relations \eqref{CCR} and the Baker-Campbell-Hausdorff formula
\beq e^{X+Y}=e^X\,e^Y\,e^{-\half[X,Y]}\eeq
for operators such that $[X,[X,Y]]=[Y,[X,Y]]=0$. The computation can be found in \cite{RouYng-20}, p. 12 and also in
\cite{Jain-07} (proof of Theorem~3.2) and in~\cite{GoeLed-06}.
Equations~\eqref{eq:eff int}--\eqref{eq:eff pot} 
follow from \eqref{eq:proj wave}  by the Fourier decompositions 
\beq V (\rv) = \int_{\R^2} \widehat{V} (\qv) e^{\im \qv \cdot \rv} d\qv\eeq
and 
\beq w(\rv_1 - \rv_2) = \int_{\R^2} \widehat{w} (\qv) e^{\im \qv \cdot \rv_1} e^{-\im \qv \cdot \rv_2} d\qv \eeq
using the the Fourier representation of the Laplacian, $-\Delta = |\qv|^2.$


\section{Coherent state representations}

The considerations so far, in particular the holomorphic representation of states in any Landau level 
via the formulas \eqref{40} and \eqref{mainformula}, are based on the use of the ladder operators $a^\#$ which connect higher Landau levels with the LLL.
We now discuss an alternative approach which offers additional insights.  It relies on a representation of states in $n$LL in terms of coherent states for the guiding center oscillator.
\subsection{Coherent states}
A coherent state in $n$LL with parameter $Z\in\mathbb C$ associated with the guiding center oscillator is defined in a standard way (\cite{KlaSka-85, ComRob-21}) as
\beq |Z,n\rangle=e^{(Zb^\dagger -\bar Zb)}\varphi_{n,0}=\sum_{m=0}^\infty \frac {Z^m}{\sqrt {m!}}\, \varphi_{n,m} e^{-|Z|^2/2}\label{cohstate}.\eeq
The overlap of two coherent states is
\beq
\langle Z,n|Z',n'\rangle=\delta_{n,n'}  e^{(2\bar Z Z'-|Z|^2-|Z'|^2)/2}=\delta_{n,n'}  e^{-|Z-Z'|^2/2}\,e^{{\mathrm i }\,\text{Im}\,(\bar Z Z')}\eeq
and
\beq
\int |Z,n\rangle\langle Z,n|\,\frac{{\mathrm d}^2Z}\pi =\sum_{m=0}^\infty |\varphi_{n,m}\rangle\langle \varphi_{n,m}|=\Pi_n\label{nproj}
\eeq
is the projector on $n$LL, where ${\mathrm d}^2Z:=\hbox{$\frac{ \mathrm i }{2}$} {\mathrm d}Z\wedge {\mathrm d}\bar Z$ is the Lebesgue measure on the plane.


The coherent states lead directly to an interpretation of $n$LL as a Bargmann space of holomorphic functions of the coherent state variable $Z$: If $\psi\in$\,$n$LL then
\beq \widehat{\Psi}(Z):= \langle \bar Z,n|\psi\rangle e^{|Z|^2/2}=\sum_{m=0}^\infty\langle \varphi_{n,m}|\psi\rangle\frac {Z^m}{\sqrt {m!}}\, \label{analyt}\eeq
is analytic in $Z$ and 
\beq \Psi(Z,\bar Z)=\widehat{\Psi}(Z)e^{-|Z|^2/2}\label{analyt2}\eeq
has the same $L^2$ norm as $\psi$ because of \eqref{nproj}. Thus the map 
\beq U_n:\psi\mapsto \Psi\eeq
is isometric from the $n$LL to the  LLL. From the definition it is clear that 
\beq U_n\varphi_{n,m}=\varphi_{0,m}\quad \hbox{and}\quad
U_n|Z,n\rangle=|Z,0\rangle\label{74}\eeq
with the inverse
\beq U_n^{-1}\varphi_{0,m}=\varphi_{n,m}\quad \hbox{and}\quad
U_n^{-1}|Z,0\rangle=|Z,n\rangle.\label{79}\eeq
Thus $U_n$ is identical to the unitary defined by ladder operators in \eqref{Un} and \eqref{Un-1}.
In particular the 
holomorphic function \eqref{analyt} is mathematically identical to the function $\hat\psi_0(z)=\sqrt{n!} f_n(z)$ in \eqref{40} if $\psi=\psi_n$ and $Z$ is identified with the complex position variable $z=x+\im y$.
Note, however, 
that $Z$ is associated with the (non-commutative) components of the guiding center operator $\mathbf R$ rather than the (commutative)  position operator $\mathbf r$. 

{\bf Remark.} By the definition \eqref{analyt} $\Psi$ depends linearly on $\psi$; the alternative definition $\Psi(Z)=\langle\psi|Z,n\rangle$, that is used in some references, leads to an anti-unitary correspondence.

\subsection{Integral kernels}\label{sec:kernel}

The unitary map $\psi\mapsto \Psi$ defined by \eqref{analyt} can also be achieved using the integral kernel
\beq G_n(Z,\bar Z;z,\bar z)= \frac 1{ \sqrt {\pi n!}}\,(z-Z)^n e^{-(|Z|^2+|z|^2-2Z\bar z)/2}= \frac 1{ \sqrt {\pi n!}}\,(z-Z)^n e^{-|z-Z|^2/2}
e^{-{\mathrm i }\,\text{Im}\, (\bar z Z)}.\label{G}\eeq
Namely, we have 
 \beq \Psi(Z,\bar Z)=\int G_n(Z,\bar Z;z,\bar z)\psi(z,\bar z)\,{\mathrm d}^2z \label{psiPsi}.\eeq
 The inverse map is given by 
\beq \psi(z,\bar z)=\int \bar G_n(z,\bar z; Z,\bar Z)\Psi(Z,\bar Z)\,{\mathrm d}^2Z\label{Psipsi}
\eeq
with
\beq
\bar G_n(z,\bar z; Z,\bar Z)=\frac 1{ \sqrt {\pi n!}}\,(\bar z-\bar Z)^n e^{-(|Z|^2+|z|^2-2\bar Z z)/2}= \frac 1{ \sqrt {\pi n!}}\,(\bar z-\bar Z)^n e^{-|z-Z|^2/2}
e^{-{\mathrm i }\,\text{Im}\, (z\bar Z)}.\label{Gbar}\eeq
The (short) proof can be found in \cite{RouYng-20}, p.~8 and also (with a slightly different notation) in \cite{ChaFlo-07}, Eq.~(34).

Note that $G_n$ can be written as
\beq g_n(z-Z)\,e^{-2{\mathrm i }\,\text{Im}\, (\bar z Z)}\eeq
where $g_n$ is essentially concentrated  in a disc of radius $\sim \sqrt{ n+1}$ and the factor is a phase factor. Recall also that the length unit is $\sqrt 2\ell_B\sim B^{-1/2}$.

A further remark is that $G_0$ is the reproducing kernel for the Bargmann space, confirming again that in the LLL $\Psi$ and $\psi$ are the same function on $\mathbb C$ just with different names for the variables. The phase factor in $G_n$ is essential for this to hold.

\section{Recap of the different expressions for the unitary maps}

In the previous sections have displayed three equivalent ways to represent a state $\psi\in n$LL by holomorphic functions in Bargmann space:
\begin{itemize}
\item[{\bf 1.}]  Apply the differential operator $\partial_{\bar z}$ $n$-times to the pre-factor of the Gaussian, cf. Eqs. \eqref{Un} and  \eqref{Un-1}. Equivalently: Expand the pre-factor in powers of $\bar z$ and keep only the highest power. The inverse mapping, LLL $\to$ $n$LL, is achieved by applying the differential operator $(\bar z-\partial_z)^n$ to the analytic function representing the state in the LLL.
\item[{\bf 2.}] Take the scalar product 
 $\langle \bar Z,n|\psi\rangle $
with a coherent state, cf. \eqref{analyt}.
\item[{\bf 3.}] Use Equation \eqref{psiPsi} with the integral kernel \eqref{G}.
\end{itemize}
The three methods generalize in all cases to many body states in $\nLL^N\equiv
 n$LL$^{\otimes_{{\rm s,a}}N}$ by applying the prescriptions to each tensor factor.
 
Computationally  the first method is usually the simplest, but the two others are sometimes more illuminating from the physics point of view since they bring the pivotal role of the guiding center variables into focus.

\section{Some special states}

\subsection{Laughlin states}

Consider an $N$-body Hamiltonian \eqref{eq:full hamil} with a rotationally symmetric interaction potential $w$ but without an external potential. Its projection onto the LLL can be written as\footnote{Note that fermionic wave-functions do not see the even $m$ terms of~\eqref{eq: LLL hamil}.}
\beq
H_{w} ^{\LLL} =  \sum_{i<j} \sum_{m\geq 0} \left\langle \varphi_{0,m} | w | \varphi_{0,m} \right\rangle (|\varphi_{0,m} \rangle \langle \varphi_{0,m} |)_{ij}. \label{eq: LLL hamil}
\end{equation}
The matrix elements 
\beq w_m:=\left\langle \varphi_{0,m} | w | \varphi_{0,m} \right\rangle=\frac{1}{\pi m!} \int_{\mathbb R^2} |w(\mathbf r)| r^{2m}e^{-r^2}d^2\mathbf r\label{pseudopot}\eeq
are usually called ``Haldane pseudo-potentials''\footnote{ For very strong interaction potentials of range much smaller than the magnetic length $\sim B^{-1/2}$, in particular if there is a hard core, an expansion of the interaction in terms of moments as in \eqref{eq: LLL hamil} is not adequate. This situation is analysed in  \cite{SY-20} which generalizes the paper \cite{LS-09}. It is shown that in an appropriate scaling limit the pseudo-potential operators $|\varphi_m\rangle\langle \varphi_m|$ also emerge, but with renormalized pre-factors involving the scattering lengths of the interaction potentials in the different angular momentum channels, rather than expectation values as in  \eqref{eq: LLL hamil}.}, cf. \cite{Haldane-1983}. If the sum is truncated at $m=q-1$ for some natural number $q$ then the {\em Laughlin wave function}
\begin{equation}\label{eq:Laughlin}
\Psi^{{\rm Lau}}_N (z_1,\ldots,z_N) = c_N \prod_{i<j} (z_i-z_j) ^{q} e^{- \sum_{j=1} ^N |z_j| ^2/2} 
\end{equation}
is an exact ground state of \eqref{eq: LLL hamil}
($L^2$-normalized by the constant in front).

The Laughlin state in LLL$^N$ can also be written as
\begin{equation}
\Psi^{{\rm Lau}}_N = c_N \prod_{i<j} (b_i-b_j)^{q}\varphi_{0,0}^{\otimes N} 
\end{equation}
and its counterpart in $n$LL$^N$, obtained by applying the unitary transformation $\mathcal U_{N,n}^{-1}$ defined by \eqref{47}
to $\Psi^{{\rm Lau}}_N$, is
\begin{equation}
\Psi_{n,N}^{\rm Lau}=c_N\prod_{i<j}(b^\dagger_i-b^\dagger_j)^q \varphi_{n,0}^{\otimes N}=
c_N\prod_{i<j}(b^\dagger_i-b^\dagger_j)^q \prod_{i=1}^N\left[ (a^\dagger_i)^n\,\varphi_{0,0}\right]
\label{Laughn1}.
\end{equation}
Its wave function is (cf. Lemma~\ref{lem:simple})
\begin{equation}
\Psi_{n,N}^{\rm Lau}(\rv_1,\dots,\rv_N) = c_N\left[ \frac 1{(n!)^{N/2}}\prod_{i=1}^N\left(\bar z_i-\partial_{z_i}\right)^n\prod_{i<j}(z_i-z_j)^q\right] e^{-(|z_1|^2+\cdots+|z_N|^2)/2}.\label{Laughn2}
\end{equation}
This is, in \emph{electronic position variables}, the exact ground state of a Hamiltonian obtained by
\begin{itemize}
 \item Projecting the Hamiltonian~\eqref{eq:full hamil} onto $\nLL^N$.
 \item Unitarily mapping the result down to an effective Hamiltonian on $\LLL^N$ using Theorem~\ref{thm:main}.
 \item Neglecting the one-body potential $V^{(n)}$ and truncating the Haldane pseudo-potential series \eqref{pseudopot}  for the interaction potential $w^{(n)}$ at $m=q-1$. \end{itemize}
 Note that the pseudopotentials $w_m^{(n)}$ now depend on $n$ because $w$ is replaced by the effective interaction \eqref{eq:eff int}. 
 
 If the wave function  \eqref{Laughn2} is transformed back to LLL$^N$ using the representation of $\mathcal U_{N,n}^{-1}$ in terms of the integral kernel \eqref{G}, we obtain again the function \eqref{eq:Laughlin} but  with the $z_i$ replaced by the coherent state {\em guiding center
variables} $Z_i$. Since the integral kernel is essentially concentrated  in a disc of radius $\sim \sqrt{ n+1}\,\ell_B$, one can expect that for large $B$ and not too large $n$ the particle densities of \eqref{eq:Laughlin}
 and \eqref{Laughn2} are essentially the same in extended systems. This is indeed vindicated by the more detailed analysis mentioned below.

 According to Laughlin's plasma analogy~\cite{Laughlin-83} the density profile of \eqref{eq:Laughlin} is for large $N$ well approximated by a droplet of radius $(\ell N)^{1/2}$ and fixed density $(\pi q)^{-1}$,
\beq
\varrho^{\rm flat}_N(\rv) := \begin{cases}\frac{1}{\pi q} \mbox{ if } |z|\leq \sqrt {q N}\\
                               0 \mbox{ otherwise}.
                             \end{cases}
\eeq
By a rigorous mean-field analysis it can be proved \cite{RouSerYng-13b} that this approximation holds in the sense of averages over discs of radius $N^\alpha$ with $1/2>\alpha>1/4$. More generally, the $k$-particle densities are well approximated in this sense by the $k$-fold tensor power of the flat density if $N\to\infty$.
The more refined analysis of classical Coulomb systems in \cite{LebSer-16,BauBouNikYau-15, Ser-20} leads to an extension of this result down to mesoscopic scales $N^\alpha$ for all $\alpha>0$. Using the transformation formula \eqref{eq:nLLdens} and the fact that the Laguerre polynomal $L_n(t)$ is close to unity for small values of the argument then allows to conclude the same result for fixed $n$ and large $N$. 
The precise statement can be found as Theorem 8.1 in \cite{RouYng-20}.

In order to adapt to an external potential $V$ a natural modification of the Laughlin state is to insert  {\em quasi-holes}, i.e., to consider states of the form
\beq \Psi(z_1,\dots, z_N)= C_N' \Psi^{{\rm Lau}}_N (z_1,\ldots,z_N)\prod_{i=1}^N f(z_i)\eeq
where 
$f(z)=c \prod_{j=1}^M(z-a_j)$ with $a_j\in \mathbb C$.  In \cite{RouYng-17} it is discussed how, for suitable choice of the quasi-hole locations $a_j$,  such states minimize the potential energy in an external potential among all states in the LLL which vanish like $(z_i-z_j)^q$ when two points come close.
This result generalizes to higher Landau levels for large $N$ by an analogous arguments as for pure Laughlin states, the intuition being again that in higher Landau levels  position coordinates should be replaced by guiding center coordinates and the potential $V$  by the effective potential 
$V^{(n)}=L_n \left(-\mbox{$\frac 14$} {\Delta}\right) V$. 

\subsection{Filling factor $\nu=5/2$}

A   Quantum Hall state with a  clear signature for filling factor $\nu=5/2$ was first observed by \cite{Willettetal-87}. It is thought to consist of  {\em two} completely filled lowest Landau levels  (i.e., with $n=0$)  but with opposite spin polarizations,  combined with a spin polarized state with $n=1$ and half filling.  This state is unusual since the great majority of observed plateaus in the Quantum Hall conductance correspond to filling factors with an odd denominator as seemingly natural for fermions. Moreover, there is no sign of a  plateau in the Hall resistance at $\nu=1/2$ for $n=0$ alone.

Since its discovery the $\nu=5/2$ state has been extensively studied experimentally, see \cite{Willett-13} for a comprehensive review. It is also very interesting theoretically because quasi-hole excitations of the principal candidates for describing this state may show {\it  non-abelian} braid statistics.  In brief, the reason is that 
the manifold of states with $2k$ quasiholes is $2^{k-1}$ degenerate and the braiding of two quasiholes defines a unitary map on the space spanned by this manifold. The unitaries corresponding to different braidings do not commute in general.
An adequate discussion of these aspects would go well beyond the scope of the present chapter but a concise account with further references can be found in \cite{Tong-16} as well as in \cite{Grei-2011}. A detailed discussion of non-abelian anyons and potential applications for quantum computation is given  in \cite{Nayak-08}. The recent review of \cite{Ma-22} discusses both theoretical and experimental aspects at filling factor $\nu=5/2$.


The main  candidates for the quantum Hall state at  $\nu=5/2$ are a {\em Moore-Read state} (\cite{Moore-Read-91}) as well as its variant obtained by particle hole conjugation. A Moore-Read state in the LLL for  an even particle number $N$ and has the form
\beq\Psi_{\rm MR}(z_1, \dots,z_N)={\rm Pf}\left(\frac 1 {z_i-z_j}\right )\prod_{i<j}^N(z_i-z_j)^q \prod_{i=1}^N e^{-|z_i|^2/2}\label{Pfaffianstate}\eeq
with the Pfaffian
\beq {\rm Pf}\left(\frac 1 {z_i-z_j}\right)=\mathcal A\left\{ \frac 1 {z_1-z_2} \cdots  \frac 1 {z_{N-1}-z_N}  \right\}\eeq
where $\mathcal A$ stands for  antisymmetrization over all possible pairings of the coordinates.  It has filling fraction $1/q$ and is antisymmetric for 
{\em even} values of $q$. It is also holomorphic because  singular factors in the Pfaffian are compensated by  the factors $(z_i-z_j)^q$. In contrast to the Laughlin wave function and its quasi-hole descendants the Moore-Read wave functions do not vanish when two particles come together, but may vanish when more particles become coincident. In particular for $q=1$ it is a ground state of the formal 3-body Hamiltonian
\beq H=\sum_{i<j<k}\delta(z_i-z_j)\delta(z_i-z_k).\eeq
The anti-Pfaffian state (\cite{Lee-07}) is the particle-hole conjugate version of \eqref{Pfaffianstate}.
The concept of particle-hole conjugation is thoroughly discussed in \cite{Zirn-21}, see also \cite{Gir-84}. In the present case it means replacing the creation operators $b_i^*$ by the annihilation operators $b_i$ and applying the result to a fully occupied LLL state (the Fermi sea) rather than the gaussian ground state $\varphi_{0,0}^{\otimes N}$, cf.  Eq. (3.1) in \cite{Zirn-21}. The Fermi sea in the LLL is a Slater determinant and has a wave function as in \eqref{eq:Laughlin} with $q=1$. The explicit formula for the anti-Pfaffian state,  given by \cite{Lee-07},  is 
\begin{multline}\label{antiPfaff}\Psi_{\rm aPf}(z_1, \dots,z_N)=\prod_{i<j}^N(z_i-z_j) \prod_{i=1}^N e^{-|z_i|^2/2}
\times \\ \int \prod_{\alpha=1}^N\mathrm d^2\eta_\alpha\prod_{i,\alpha}(z_i-\eta_\alpha)e^{-|\eta_\alpha|^2/2}\prod_{\beta<\gamma}(\eta_\beta-\eta_\gamma)\Psi_{\rm MR}(\bar \eta_1,\dots\bar\eta_N).\end{multline}
Yet another candidate for the quantum Hall state at $\nu=5/2$ is a particle-hole symmetric state suggested in \cite{Son-15}.  

All the states mentioned are specified in the lowest Landau level, where they are given by holomorphic wave functions. The general procedure of Sections 3-6 then takes care of lifting them up to any Landau level  $\nLL^N$. In the $\nu=5/2$ case one takes $n=1$ and assumes a fully spin polarized state. An important point is now  that the effective interaction  depends on the Landau level as described in 
Eq.\ \eqref{eq:eff int}. For instance, a  Coulomb potential is more repulsive at short distances for $n=1$ than for $n=0$, cf.\ Fig.\ 1 in \cite{CifQui-10}. This may explain why, e.g. a Moore-Read state with $q=2$ can have an energy gap in the $n=1$ level and exhibit a FQHE plateau, while being a gapless Fermi liquid in the LLL.  

The nature of the true ground state at $\nu=5/2$ is, however,  still a matter of considerable debate. The issues of spin polarization and the possibility of {\em Landau level mixing}  play an important role here. In this chapter so far it has  always been assumed that a single spin polarized Landau level, namely the highest partially filled one, is the arena for the phenomena under consideration while all lower levels are completely filled and can be regarded as inert.  Landau level mixing means that the projection of the interaction Hamiltonian into a single Landau level is not adequate and processes where interactions induce transitions between different levels become relevant. Mathematically this means that it is not sufficient to consider states that are simply a tensor product of a state in a partially filled level with filled lower lower levels but one must consider states which are linear combinations of states with partial filling in more than one level.  

Effective Hamiltonians to account for Landau level mixing at $\nu=5/2$  were proposed in \cite{Pet-13}, \cite{Sim-13} and \cite{Sod-13},   and in \cite{Rez-17} it was concluded that when mixing is allowed, an anti-Pfaffian state is favoured. The effects of mixing may depend, however, on the size and shape of samples, and clear-cut answers for the thermodynamic limit are hard to come by. Moreover, the
 strength of the mixing depends on the material with, e.g., ZnO showing stronger effects that  GaAs. The paper by \cite{Luo-21} deals in particular with systems with strong mixing and the question of stability for incompressible ground states.

\section{Conclusions} In this chapter it has been explained how quantum Hall states for many-body systems  in arbitrary Landau levels can be mapped unitarily onto states in the lowest Landau level. The basis for this mapping is a splitting of the two-dimensional  position variables into two sets of non-commutative variables associated respectively with the  guiding centers and the cyclotron motion of the particles in the magnetic field. In a fixed Landau level the cyclotron variables play the role of  bystanders while the state for the guiding center variables can be described by holomorphic wave functions independently of the Landau level. As a consequence, it is possible to treat particle densities and interactions in higher Landau levels in terms of suitably transformed densities and interactions in the lowest Landau level. 

These insights are applied to Laughlin states in arbitrary Landau levels as well as to Pfaffian and anti-Pfaffian states that have been proposed for explaining pronounced quantum Hall plateaus at filling fraction 5/2 and may exhibit non-abelian braid statistics of excitations. Their discussion in this chapter is quite brief, however, but detailed treatments of these matters are presented in other chapters of the Encyclopedia.

\noindent\textbf{Acknowledgements.} Thanks are due to Nicolas Rougerie for collaboration on the review  \cite{RouYng-20} which is the basis for the present chapter.

\bibliographystyle{haabib}

\end{document}